\title{A Sheaf-Theoretic Characterization of Tasks in Distributed Systems}
\author{Stephan Felber}{TU Wien, Vienna, Austria}{stephan.felber@ecs.tuwien.ac.at}{0009-0003-6576-1468}{}
\author{Bernardo Hummes Flores}{École Polytechnique, Palaiseau, France}{bernardo.hummes-flores@lix.polytechnique.fr}{0000-0003-2325-1497}{}
\author{Hugo Rincon Galeana}{TU Berlin, Berlin, Germany}{hugo.galeana@ecs.tuwien.ac.at}{0000-0002-8152-1275}{}
\authorrunning{S. Felber, B. Hummes Flores, H. Rincon-Galeana}
\keywords{Task Solvability, Locality, Distributed computing,  Sheaf Theory, Kripke Frame, Applied Category Theory, Cohomology Theory}
\tikzstyle{LLa}=[fill=white, draw=black, shape=rectangle, inner sep=3pt, rounded corners]
\tikzstyle{LLb}=[fill={rgb,255: red,64; green,64; blue,64}, draw=black, shape=circle]
\tikzstyle{white circle}=[fill=white, draw=none, shape=circle, inner sep=1pt]
\tikzstyle{stalk}=[fill=white, draw=none, shape=rectangle]
\tikzstyle{terminal}=[fill=white, draw=red, shape=rectangle, rounded corners, inner sep=3pt, align=center]
\tikzstyle{edge}=[fill=white, draw=black, shape=circle]
\tikzstyle{textnode}=[fill={rgb,255: red,236; green,236; blue,236}, draw=none, shape=rectangle, align=left, rounded corners]
\tikzstyle{partially terminated}=[fill={rgb,255: red,255; green,213; blue,0}, draw=black, shape=rectangle, rounded corners, inner sep=3pt]
\tikzstyle{stalk 45}=[fill=none, draw=none, shape=circle, rotate=45, inner sep=1pt]
\tikzstyle{conflict}=[fill=white, shape=rectangle, draw=red, rounded corners, inner sep=4pt, align=center, line width=2pt]
\tikzstyle{injective arrow}=[->>, draw=black]
\tikzstyle{process a}=[fill=white, draw={rgb,255: red,240; green,72; blue,6}, shape=circle]
\tikzstyle{process b}=[fill=white, draw=blue, shape=circle]
\tikzstyle{dashed grey arrow}=[->, dotted, line width=0.8pt, draw={rgb,255: red,191; green,191; blue,191}, fill=none]
\tikzstyle{dashed arrow}=[->, dotted, line width=1pt]
\tikzstyle{a line}=[-, draw={rgb,255: red,240; green,72; blue,6}, line width=1pt]
\tikzstyle{b line}=[-, draw=blue, line width=1pt]
\tikzstyle{dotted b line}=[-, draw=blue, dashed]
\tikzstyle{dotted a line}=[-, draw={rgb,255: red,240; green,72; blue,6}, dashed]
\tikzstyle{restriction}=[->]
\tikzstyle{no restriction}=[draw={rgb,255: red,255; green,213; blue,0}, -, line width=1pt]
\tikzstyle{comm arrow}=[->]
\tikzstyle{dotted line}=[-, dotted, line width=1pt]
\tikzstyle{dashed undirected}=[-, dashed, draw=black]
\tikzstyle{causal consistency}=[-, line width=1pt, dashed, draw=red]
\tikzstyle{causal consistency b}=[-, draw=blue, line width=1pt, dashed]
\tikzstyle{mono}=[->, {-{Hooks[left]}}]
\tikzstyle{arrow red}=[draw=red, ->]
\tikzstyle{message arrival}=[->, line width=1pt]
\tikzstyle{red dot}=[fill=red, draw=black, shape=circle]
\tikzstyle{green dot}=[fill={rgb,255: red,0; green,128; blue,128}, draw=black, shape=circle]
\tikzstyle{new style 0}=[fill=blue, draw=black, shape=circle]
\tikzstyle{bi dash red}=[draw=red, dashed]
\tikzstyle{bi dash blue}=[draw=blue, dashed]
\tikzstyle{circ red}=[fill=none, draw=red, shape=circle]
\tikzstyle{circ blue}=[fill=none, draw=blue, shape=circle]
\tikzstyle{circ green}=[fill=none, draw={rgb,255: red,0; green,128; blue,128}, shape=circle]
\tikzstyle{circ black}=[fill=none, draw=black, shape=circle]
\tikzstyle{uni}=[draw=black, ->, fill=none]
\tikzstyle{bi}=[fill={rgb,255: red,128; green,128; blue,128}, draw=black, <->]
\tikzstyle{uni dash}=[fill=none, ->, dashed]
\tikzstyle{bi dash}=[-, dashed]
\tikzstyle{uni dash red}=[dashed, draw=red, ->]
\tikzstyle{uni dash blue}=[dashed, draw=blue, ->]
\tikzstyle{bi dash red}=[-, draw=red, dashed]
\tikzstyle{uni blue}=[draw=blue, ->]
\tikzstyle{uni red}=[draw=red, ->]
\tikzstyle{bi blue}=[-, draw=blue, <->]
\tikzstyle{bi red}=[draw=red, <->]
\tikzstyle{bi dash blue}=[-, draw=blue, dashed]
\tikzstyle{uni green dash}=[draw={rgb,255: red,0; green,128; blue,128}, ->, dashed]
\tikzstyle{uni green}=[draw={rgb,255: red,0; green,128; blue,128}, ->]
\newcommand{\demph}[1]{\textbf{#1}} 
\newcommand{\remph}[1]{\emph{#1}}   
\newcommand{\Cat}[1]{\mathbf{#1}}   
\newcommand{\To}{\rightarrow}       
\renewcommand{\leq}{\leqslant}
\renewcommand{\epsilon}{\varepsilon}
\newcommand{\angles}[1]{\left\langle #1 \right\rangle}
\newcommand{\N}{\mbox{$\mathbb{N}$}}
\newcommand{\Z}{\mbox{$\mathbb{Z}$}}
\newcommand{\GblState}[1]{\angles{#1}} 
\newcommand{\ids}{{\ensuremath{[n]}}}
\newcommand{\csim}{{\ensuremath{\simeq}}}
\newcommand{\F}{\ensuremath{\mathcal{F}}}
\newcommand{\g}{\ensuremath{\GblState{g}}}
\newcommand{\h}{\ensuremath{\GblState{h}}}
\newcommand{\w}{\ensuremath{\GblState{w}}}
\newcommand{\ccl}{\ensuremath{ccl}}
\newcommand{\exec}{\ensuremath{\mathcal{E}}}
\newcommand{\eframe}{\ensuremath{\mathcal{K}}}
\newcommand{\slice}{\ensuremath{\mathcal{N}}}
\newcommand{\systemslice}{\ensuremath{\mathcal{S}}}
\newcommand{\gstates}{\ensuremath{\mathrm{C}}}
\newcommand{\gstate}{\ensuremath{\mathrm{c}}}
\newcommand{\run}{\ensuremath{\sigma}}
\newcommand{\runs}{\ensuremath{\Sigma}}
\newcommand{\proc}{\ensuremath{p}}
\newcommand{\processes}{\ensuremath{\Pi}}
\newcommand{\protocol}{\ensuremath{\mathcal{P}}}
\newcommand{\task}{\ensuremath{T}}
\newcommand{\fdeci}{\ensuremath{\delta}}
\begin{document}
\maketitle
\begin{abstract}
We introduce a sheaf-theoretic characterization of task solvability in general distributed computing models, unifying distinct approaches to message-passing models. We establish cellular sheaves as a natural mathematical framework for analyzing the global consistency requirements of local computations. Our main contribution is a task sheaf construction that explicitly relates both the distributed system and task, in which terminating solutions are precisely its global sections. We prove that a task can only be solved by a system when such sections exist in a task sheaf obtained from an execution cut, the frontier in which processes have enough information to decide. Our characterization is model-independent, working under varying synchronicity, failures and message adversaries, as long as the model produces runs composed of global states of the system. Furthermore, we show that the cohomology of the task sheaf provides a linear algebraic description of the decision space of the processes, and encodes the obstructions to find solutions. This opens way to a computational approach to protocol synthesis, which we illustrated by deriving a protocol for approximate agreement. This work bridges distributed computing and sheaf theory, providing both theoretical foundations for analyzing task solvability and tools for protocol design leveraging computational topology.
\end{abstract}

\section{Introduction} \label{sec:intro} 

Assessing the solvability or impossibility of a given task in a distributed system is at the center of distributed systems research. A variety of theoretical tools have been developed by sourcing from different fields such as combinatorics, topology and epistemic logic. One of the earliest results is the two generals problem impossibility \cite{AEH75,lamport1983weak}, which roughly states that consensus is impossible in networks with message loss and unbounded delays. Numerous results have followed since: the celebrated FLP \cite{FLP85} impossibility shows that consensus cannot be achieved in asynchronous and crash-prone systems, the lossy-link impossibility \cite{SW89} (the authors referred to it as \emph{mobile failures}) that shows that consensus is impossible even in a synchronous two-process model if just one message may be lost in each round, \cite{HM90} characterized consensus in distributed systems via \remph{epistemic logic} and \cite{NSW24} provides a \remph{topological} characterization of consensus under general message adversaries.

Although the aforementioned results are mostly related to problem solvability, they are specifically focused on consensus. Nevertheless, these results undoubtedly contributed to the development of frameworks that would later address general task solvability. Some such noteworthy results are  Herlihy and Shavit's characterization of wait-free t-resilient shared memory distributed computation, via combinatorial topology \cite{HS93}, Attiya, Nowak and Castañeda's characterization of terminating tasks \cite{AttiyaCastanedaNowak23}, and Alcántara, Castañeda, Flores and Rajsbaum's characterization of look-compute-move wait-free robot tasks \cite{ACFR19}.

Sheaf theory originated from studying how local constraints give rise to global solutions, paired with cohomology theory as a measure of obstructions to such global constructions. Cellular sheaves were introduced by Curry \cite{Curry14} as a combinatorial counterpart to the common notion of sheaves found in the literature \cite{MacLaneMoerdijk94}, with the interest of having a theory that is computable. Most importantly to us, the theory of cellular sheaves presents the concept of sheaves in an approachable manner, whence it becomes clear its usefulness to depict global properties of local information, as is the case for distributed computing. Sheaves have proven valuable in studying similar local-to-global phenomena, from opinion dynamics \cite{HansenGhrist21} to contextuality in quantum mechanics \cite{AbramskyBrandenburger11} and sensor integration \cite{Robinson17,JoslynCharlesDePernoGouldNowakPraggastisPurvineRobinsonStrulesWhitney20}. We now extend this framework to distributed systems.

In this paper, we extend the brief announcement \cite{felberhummesfloresrincongaleana25} introducing a generic sheaf-theoretic framework for analyzing task solvability parametrized by indistinguishability relations on global states of the system, forming a graph-like structure. These indistinguishability graphs are closely related to Kripke frames in epistemic logic, as explored in \cite{bookof4}. Furthermore, our framework enables us to leverage the computational methods in cohomology theory so to analyze task solvability, and even derive solving algorithms in systems where the indistinguishability graph admits a cohomology group construction.

\textbf{Contributions} 
We introduce a novel sheaf-theoretic framework for analyzing distributed task solvability. Our approach complements the existing frameworks by connecting category theory with distributed computing. This novel bridge allows us to directly apply computational methods for decision function synthesis using linear algebra. We expect this first connection to provide fertile ground for more applications in distributed computing.

More precisely, we provide the following.
\begin{itemize}
  \item A task sheaf construction (\Cref{def:task-sheaf}) that captures both the local knowledge of each process (local constraints) and relates it to valid solutions, assessing global consistency;
  \item A finite substructure of the executions that preserves the task solvability as system slices (\Cref{def:system-slice});
  \item An equivalence between the existence of solutions to a task and the existence of sections over the corresponding system slice (\Cref{thm:term_task_solvable});
  \item A way to use cohomology to transform impossibility arguments into computational problems (\Cref{thm:computable_delta}).
\end{itemize}

\textbf{Paper Organization} 
We start by defining the model of a distributed computing system we use in \Cref{sec:model}, giving special attention to the discrete structures used to depict the system executions (\Cref{sec:frames}) and its tasks (\Cref{sec:tasks}). \Cref{sec:system-slices} introduces system slices, the smallest sufficient structure needed to assess task solvability, and how it brought to the categorical machinery. \Cref{sec:sheaf-model} defines the task sheaf, first by its explicit construction (\Cref{sec:task-sheaf-explicit}), then by its equivalent categorical formulation (\Cref{sec:task-sheaf-categorical}), so to state the main theorem relating the solvability of tasks to the existence of sections (\Cref{sec:task-sheaf-solvability}). \Cref{sec:cohomology} uses the categorical machinery of task sheaves to define its cohomology construction, so used to computationally assess task solvability. \Cref{sec:conc} provides some concluding words.

\section{System Model} 
\label{sec:model}
We now define how processes evolve, communicate and make decisions, in order to define the runs of that system. Our main contribution on task solvability \emph{solely depends} on the runs and the task specification and consequently our approach is applicable to any system providing runs consisting of global states consisting of local states. As there are multiple approaches to modeling distributed systems, we do not formally specify how to obtain said runs, but exemplify it via a synchronous message passing system, which is one common choice to represent a distributed system (e.g., \cite{bookraynal13}).

We call \emph{local protocols} processes \(\processes = \{\proc\}_{i \in \ids}\) which are deterministic state machines \(\protocol_p\) containing the possible \demph{local states}, together with distinguished input states. The local protocol is defined by \demph{transition function}, a \demph{communication function} and a \demph{decision function} \(\fdeci_p \) mapping local states to task-specific outputs. Processes evolve depending on their previous local state and some partial information about other processes local states, usually represented via messages they received. The \demph{possibilities} of (partial) information reception of the processes about the whole system state is captured as the \demph{adversary}, also called the \demph{message adversary} \cite{SW89,santoro07,schmid_strongest_2018}, \emph{heard-of predicates} \cite{charron-bost09}, \demph{time-varying-graphs} \cite{casteigts_time-varying_2012}, \demph{scheduler} \cite{moses_layered_2002} etc., which influence the behavior of the system. Since there is a pallette of different adversaries, each different in their way they impact the system, and the precise capabilities of the adversary \emph{are not important in stating our results}, we do not formalize it further. The only object we need is a \demph{run} consisting of a sequence of \demph{configurations}, defined in what follows.

\subsection{System Runs and System Frames} 
\label{sec:frames}

A \demph{global configuration} \(\g\) is a tuple of local states, one per process. A projection function \(\pi_{\proc}: S^n \to S_{\proc_i}\) is defined as mapping configurations to local states. A projection to the inputs $\pi_I: S^n \to I^n$ is defined as retrieving the vector of input states, and accordingly they can be composed $\pi_{I_{i}} = \pi_{\proc_i} \circ \pi_I : S^n \to I_{\proc}$, projecting to $p_j$'s input. A configuration is called \demph{terminal} if all processes' decision functions map to a valid output value. If this is true for some processes, but not all, then the configuration is called \demph{partially terminated}.
A decision function \(\delta\) that always eventually provides a terminal configuration is called \demph{terminating}. Additionally, we assume that the input values of a process are encoded into their local states and are not forgotten. A sequence of global configurations is an \demph{execution} or a \demph{run} of the system, we identify the \demph{system} with the set of all of its runs, denoted $\Sigma$.

\begin{example}[Synchronous message-passing system with lossy link adversary]
    For example, the \emph{synchronous lossy link} model for two processes $\Pi = \{ a,b \}$ produces a system frame. Informally a \demph{synchronous} message adversary evolves in synchronous rounds where processes compute a new state from the messages received in the previous round and the current state \demph{simultaneously}. Messages sent at the beginning of one round either arrive at the end of that round, or are lost forever. The \emph{message adversary} is allowed to drop at most one message in each round, implying that processes know that if they receive no message at the end of a round, then their own message arrived. Written differently, possible message arrival in each round may be indicated by an edge in $\{ \leftarrow, \leftrightarrow, \rightarrow \}$, i.e., the no-arrow case is excluded. We assume that processes remember everything, i.e., they have unlimited memory to keep all their history and additionally, they transmit their whole history in every round (in the literature this is know as the \emph{full information protocol} and mostly used for impossibility arguments, see for example \cite{FichRuppert03} where they call it a \emph{full information algorithm}).
    
    It is well known that terminating consensus is already impossible here, for example Santoro and Widmayer proved it in \cite{SW89}.

    \begin{figure}[ht]
        \centering
        \makebox[\textwidth][c]{\tikzfig{LL-example-0}}
        \caption{The first line counts the number of rounds. In the middle we depict the actual message reception, where an arrow from $a$ to $b$ indicates that $a$'s message to $b$ arrived in that round. At the bottom we trace the resulting \emph{global states} in that specific run. The first global state only contains the input values:$1$ for $a$ and $0$ for $b$. The second global state records the first and which messages arrived in the first round. The third global state records the second and which messages arrived in the second round and so on ...}
        \label{fig:placeholder}
    \end{figure}
\end{example}

\label{sec:system_frame}
We can now define execution graphs that do not depend on the system's decision functions.

\begin{definition}[Execution Graph]\label{def:execution-graph}
  Let $\runs$ be a set of runs. The \demph{execution graph} of \(\runs\) is a directed graph $\exec_{\runs} = (V(\exec_\runs), E(\exec_\runs))$ where:
  \begin{align*}
    V(\exec_\runs) &:= \{ \g \mid \exists \run \in \runs, \exists i \in \mathbb{N} \text{ such that } \run_i =\g \},\\
    E(\exec_\runs) &:= \{ (\g, \h) \mid \exists \run \in \runs, \exists i \in \mathbb{N} \text{ such that } \run_i=\g, \, \run_{i+1}=\h \}.
  \end{align*}
\noindent The uniquely generated execution graph generated by the system $\runs$ is denoted \(\exec_{\runs}\).
\end{definition}

From the system model established in \Cref{sec:model}, we formalize the requirement that the decision function must be deterministic. This takes the form of a system frame, where any set of configurations that constitute a run are equipped with process-wise indistinguishability relations. The system frame will later parametrize where, in the space of configurations, a decision function must behave constantly for lack of distinguishing power between identical states of the same process.

\begin{definition}[Configuration Indistinguishability] \label{def:sim-cfg}
  Two configurations $\g$, $\h$ are \demph{indistinguishable} for process $\proc$, denoted $\g \sim_{\proc} \h$, iff $\proc$ has the same local state in both $\g$ and $\h$, i.e., $\pi_{\proc}(\g) = \pi_{\proc}(\h)$.
\end{definition}

Edges on the execution graph are also called \demph{causal links}, as they arrange the configurations in a causal sequences that follows the execution of a possible run. An acyclic execution graph also induces a partial order $\leq$ over configurations, we write $\g \leq \h$ if there exists a path from $\g$ to $\h$.

We define now the system frame, which formalizes the information available to each process throughout the possible executions of the distributed system. This structure will be later used in \Cref{def:task-sheaf}, on the construction of the task sheaf, to make precise the idea that a process must always choose the same value when the information available to it is the same.

\begin{definition}[System Frame]\label{def:sysframe}
  Let $\exec_{\runs}$ be the execution graph generated by the set of runs \(\runs\), and $\{\sim_{\proc}\}_{\proc\in\processes}$ the set of indistinguishability relations on $V(\exec)$, indexed by each process $\proc \in \Pi$. We say that $\eframe := (V(\exec), \{\sim_{p}\}_{p \in \Pi})$ is the \demph{system frame} induced by $\runs$.
\end{definition}

\begin{remark}[Equivalence Relation on the System Frame]\label{rmk:equiv-frame}
  Note the indistinguishability relation on local states defines an equivalence relation.
\end{remark}

See in \Cref{ex:system-frame} for the explicit construction of a system frame after one step.

\begin{example} \label{ex:system-frame}
  A system frame after zero and one step corresponding in the lossy-link synchronous message adversary from the previous example is shown in \Cref{fig:LL-system-frame}. The frame after zero steps could be written as $V(\exec_{LL}) = \{(0,0), (0,1), (1,0), (1,1)\}$ and $\sim_a = \{ ((0,0),(0,1)), ((1,0),(1,1)) \}$, $\sim_b = \{ ((0,0),(1,0)), ((0,1),(1,1)) \})$.
  This is the finest execution graph possible under the lossy-link adversary. Dotted arrows are the causal links. The colored edges denote the indistinguishability edges between configurations, orange for $a$, green for $b$. The nodes here represent the configurations: for example, after the first step, $a$ cannot distinguish between $(0,0\rightarrow)$ from $(0,1\rightarrow)$ as $a$ hasn't received a message in neither configuration, whereas $b$ can distinguish the two configurations because it has a different inputs in them.

  \begin{figure}[ht]
    \makebox[\textwidth][c]{\tikzfig{LL-example}}
    \caption{Depiction of the system frame up to step $1$ of the lossy-link synchronous message adversary together with a full-information protocol. This system frame grows to the right.}
    \label{fig:LL-system-frame}
  \end{figure}

  An example for the causality relation would be the configuration $(0,0\leftrightarrow)$ depending on the configuration $(0,0)$, written as $(0,0)\leq(0,0\leftrightarrow)$.
\end{example}

\subsection{Distributed Tasks and their Algebraic Structure} 
\label{sec:tasks}
Given a system, we can now talk about the distributed tasks for which we will provide a sheaf-theoretic perspective.

\begin{definition}[Tasks] \label{def:task}
  A \demph{task} is a triple \(\task = \langle \mathcal{I}, \mathcal{O}, \Delta \rangle\), where \(\mathcal{I}\) is the set of possible input vectors, \(\mathcal{O}\) is the set of possible output vectors and \(\Delta: \mathcal{I} \to 2^\mathcal{O}\) is a map associating to each input vector the set of valid output vectors. \(\mathcal{I}_\proc\), \(\mathcal{O}_\proc\) denote the possible inputs and outputs restricted to a process \(\proc\).
\end{definition}

\begin{definition}[Terminating Task Solvability] \label{def:task-validity}
  A decision function \(\delta_i\) is said to \demph{solve} a task \(\task\) if for every run \(\run = (\g)_{t \in \N} \in \runs\) the following holds:
  \begin{itemize}
    \item \textbf{Termination}: For every process \(\proc_j \in Pi\), there is a step \(t \in \N\) such that \(\fdeci(\pi_j(\g_t)) \neq \bot\),
    \item \textbf{Validity}: There is an output vector \(o \in \Delta(\pi_I(\g))\), such that for every process \(\proc_j \in Pi\) there is a step \(t \in \N\) where \(\fdeci(\pi_j(\g_t)) = o_\proc\) and for every \(t' \le t\), \(\fdeci(\pi_j(\g_{t'})) = \bot\).
    \end{itemize}
\end{definition}

Termination requires every process to eventually decide on some value, and validity requires that all individual decisions correspond to a valid output configuration for the respective input configuration. Note that the decision of process \(\proc\) corresponds to the first value obtained by its decision function other than \(\bot\).

The topological approach to distributed computing \cite{HerlihyKozlovRajsbaum14} provides a combinatorial structure for defining tasks as simplicial complexes. Recall that a \demph{chromatic simplicial complex} is a set of vertices and a set of simplices defined from the vertices: a subset of the powerset of vertices that is closed under inclusion. Each vertex has a color, the process identity, and each face can depict at most one instance of each color. Vertices are local states and faces are then global states.

The sets of input \(\mathcal{I}\) and output \(\mathcal{O}\) values are now chromatic simplicial complexes, and the task specification \(\Delta : \mathcal{I} \to 2^{\mathcal{O}}\) works the same, and a task consists of a relation between valid outputs simplices for a given set of inputs simplices.
This formulation captures the combinatorial nature of a task specification. It which will be later used in \Cref{sec:sheaf-model} as the data of the task that must be tracked by the processes.

\section{From System Frames to Categories} 
\label{sec:system-slices}

Termination on an execution graph can be represented as a set of configurations that \demph{cuts} the execution graph in half. Indeed, termination by definition is just a set of configurations that intersects any run, together with some extra conditions that we will define in the following. Unfortunately, not every such cut through the execution graph need be finite and therefore easy to find. Thus we do not concern ourselves with how we look for one, instead describe its shape.

\begin{definition}[Execution Cut]\label{def:excut}
    Let $A \subset V(\exec)$ be a set of configurations. We say that $A$ is an \demph{execution cut} iff it is a cut set in $\exec$, i.e., it intersects every run in $\exec$.
\end{definition}

An execution cut represents a set of ``unavoidable'' configurations within the system. Furthermore, we say that an execution cut $A \subset V(\exec)$ is \demph{terminal} iff any $U \in A$ is terminal, i.e., at configuration $U$, all processes must have decided.

\begin{definition}[Local Star]\label{def:localstar}
    Let $\g \in V(\exec)$ be a configuration, $\proc \in \Pi$, and $N_{\proc}(\g):= \{\h \in V(\exec) \mid \g \sim_{\proc} \h \}$. That is, $N_\proc(\g)$ is the equivalence class of $\g$ under $\sim_\proc$. We define the \demph{local star} of $\g$, denoted by $\slice(\g)$, as the labeled graph obtained by $\bigcup_{\proc \in \Pi} N_\proc(\g)$ and extend it over sets by $\slice(A) = \bigcup_{\g \in A} \slice(\g)$
\end{definition}

\begin{remark}
  Note that for any $\proc \in \Pi$, $N_\proc(\g)$ induces a complete graph with all edges labeled by $\proc$.
\end{remark}

\begin{definition}[Causal Closure]\label{def:causalclosure}
    Let $A$ be a terminal execution cut and $\slice(A)$ its local star. The \remph{causal closure} $\ccl()$ contains all configurations that lie between $\slice(A)$ and $A$:
    \[\ccl(\slice(A)) = \{ \g \in A \mid \exists \h \in \slice(A), \exists \w \in A, \h \leq \g \leq \w \}\]
\end{definition}

A \demph{causal closure} $\ccl(\slice(A))$ of a terminal execution cut $A$ extends to the partially terminated configurations in $\slice(A) \setminus A$, where at least one process $p$ has already terminated, but not necessarily all of them. Any partially terminated configuration \g eventually results into a terminated configuration \h in $A$. The causal closure contains all successors of partially terminated configurations up until they result in a fully terminated configuration in $A$.
As some processes have already decided in a partially terminated configuration \g, we ensure that they keep their decided values in any successor configurations \h (reflecting that decisions are final) by constructing the \demph{causal consistency} $\{\simeq_{\proc}\}_{\proc\in\processes}$ relation such that $\g\simeq_{p}\h$.

\begin{definition}[Causal Consistency Relation] \label{def:consistency_ensuring}
    We define the \demph{causal consistency} relation $\{ \csim_{\proc} \}_{\proc\in\processes}$ as the symmetric closure of the binary relation composition $\leq\circ\sim_{\proc}$.
\end{definition}

Intuitively, given a causal closure of a terminal execution cut $\ccl(\slice(A))$, the causal consistency relation relates all configurations where a process terminated in a \demph{preceding} or \demph{indistinguishable} configuration. As $\leq$ and $\{\sim_{\proc}\}_{\proc\in\processes}$ are reflexive, $\{\simeq_{\proc}\}_{\proc\in\processes}$ is reflexive, and contains both relations $\leq$ and $\{\sim_{\proc}\}_{\proc\in\processes}$. Observe that any causally dependent configurations in $\g,\h\in A$ such that $\g \leq \h$ are related $\g\simeq\h$ (and also $\h\simeq\g$) for all processes. Configurations \g in $\ccl(\slice(A))$ that are not in $A$, are related to a configuration \h in $A$ via the indistinguishability relation of $p$ and any successor of \g is also related to \h for $p$. This ensures that the partially terminated process $p$ keeps its terminated value.

\begin{definition}[System Slice] \label{def:system-slice}
  Let $A \subset V(\exec)$ be an execution cut in the system frame $\eframe$, we define a \demph{system slice}, denoted by $\systemslice_A = (\gstates, \{ \csim_{\proc} \}_{\proc\in\processes})$, as a tuple consisting of the causal closure over the local star over $A$, i.e., \(\gstates = \ccl(\slice(A))\), together with its \demph{causal consistency relation}.
\end{definition}

In general system slices need not be finite, see \Cref{ex:tilted-consensus} for an explicit construction of an execution graph with a necessarily infinite system slice. An finite system slice is depicted in \Cref{ex:term-task-sheaf}

\begin{example}[System slice of single shot message adversary execution]\label{ex:tilted-consensus}
  In this example, we consider the \demph{tilted consensus task}, where both processes have to decide on $a$'s value. The synchronous communication adversary here allows exactly one message from $a$ to $b$ per run, the set of all runs therefore consists of all $\sigma_k = - \rightarrow^k -^\omega$, where $k<\infty$.
  In \Cref{fig:tilted-consensus} $a$'s indistinguishability relations between configurations are again orange, $b$'s are blue. The configurations within dashed boxes are terminated, the yellow boxes are partially terminated. The dashed red and blue edges represent the causal consistency relation (most transitive edges are omitted favoring readability).
  \begin{figure}[ht]
    \makebox[\textwidth][c]{\scalebox{0.9}{\tikzfig{tilted-consensus}}}
    \caption{Depiction of the tilted consensus task, the configurations in neither dashed box are partially terminated. The system slice consists of all the infinitely many configurations in either of the dashed boxes. We exemplify here that system slices may necessarily be \demph{infinite}, as there is no finite system slice with a \demph{section}, as any configuration on the left has to decide $0$ and can thus not be in the connected to any configuration on the right. We formalize this in \cref{sec:sheaf-model}. This system frame grows upwards.}
    \label{fig:tilted-consensus}
  \end{figure}

  Clearly, $b$ just waits until something arrives, whereas $a$ can terminate immediately. As $b$ cannot distinguish whether it will end up in the $0$ or $1$ deciding half until it receives that message (i.e. $b$ doesn't know $a$'s value), no configuration $U$ where no message has arrived yet can be terminal. At the same time, every configuration $U$ is partially terminated as it is indistinguishable for $a$ from a terminal configuration (marked in the dashed terminal regions). The smallest system slice therefore has infinite size in this execution graph.
\end{example}

System slices naturally form a graph structure where we can define sheaves, where the set of global configurations are its vertices and the causal consistency relations its edges. This suffices for our purposes, but note that this consists of the more general structure of a \demph{cellular complex}, where vertices and edges are \(0\) and \(1\) cells, and the execution structure naturally satisfies its local finite requirement. Cellular complexes are fundamental objects in algebraic topology, and a full account can be found in \cite{hatcheralgtop}. Most importantly, a cellular complex \(X\) has an underlying partial ordering of its cells (vertices and edges), denoted \(P_X\), which we exemplify now.

Finally, in order to access the categorical machinery needed to define the task sheaves, we need need to introduce the concept of a \demph{cellular category} \cite{MakkaiRosicky14}, which categorifies the cellular complex \(X\). by viewing its associated poset \(P_X\) as a category, which preserves the topological structure while enabling a categorical perspective better suited for defining sheaves.

\begin{example}[Undirected Graph]\label{ex:graph}
  An undirected graph \(G = (V,E)\), with \(V\) a collection of vertices and \(E\) a collection of edges, gives rise to a cellular category \(\Cat{Cell}(G)\). Note that the data of an edge \(v_1 \xleftrightarrow{e} v_2\) consists of an unordered pair of vertices \(\{v_1, v_2\}\), and the incidence relation of an edge to a vertex satisfies the inclusion \(v_1 \xhookrightarrow{} \{v_1,v_2\} = e\). With this in mind, \(\Cat{Cell}(G)\) is obtained by constructing an object for each vertex \(v \in V\) and for each pair of vertices \(e \in E\), with an arrow \(v \to e\) whenever \(v \xhookrightarrow{} e\). Objects obtained from vertices are called \demph{\(0\)-cells} and those obtained from edges are called \demph{\(1\)-cells}. This construction preserves the information of the graph, while adding a categorical structure.
\end{example}

Observe that any system slice is an undirected graph and therefore induces a cellular category. We will make ample use of this in the the following.

\section{Task Sheaves} \label{sec:sheaf-model} 
In this section, we introduce the task sheaf as a mathematical framework for encoding the solvability constraints of a distributed task. Sheaves provide a formalism that captures both the global structure of a task definition and the local constraints which protocols must respect. We present two equivalent formulations. First, an explicit construction, used to build the intuition over the running examples. Then, we present its categorical foundations, which enables the use of cohomology later in \Cref{sec:cohomology}. We then state the main theorem relating the solvability of a task to the existence of sections in the appropriate task sheaf.

\subsection{The Task Sheaf} \label{sec:task-sheaf-explicit} 
Sheaves can be informally understood as a structure allowing to track data that is associated to pieces of a space. In our case, we will be tracking the task data, the possible solutions according to the task specification, across the possible global states that our distributed system may assume. As such, the runs of the system being analyzed will provide us with our base space: a \remph{system slice} \(\systemslice\), derived from its system frame \(\eframe\), turned into a cellular category \(\Cat{Cell}(\systemslice)\), as seen in \Cref{sec:system-slices}.

A sheaf defined on a cellular category is a cellular sheaf \footnote{See \cite{Curry14} for a thorough treatment of cellular sheaves, and \cite{MacLaneMoerdijk94} for an overview of sheaf theory.}. A key characteristic of \remph{cellular} sheaves is the combinatorial nature of this space, which allows us to look at discrete structures and their generalizations to higher dimensions, such as the graphs and cellular complexes. This nature also gives us access to a much simpler theory. In full generality, sheaves must be shown to respect a technical \remph{sheaf condition}, which is automatically satisfied in the case of cellular sheaves, as proven in \cite[Theorem 4.2.10]{Curry14}.

In a distributed system, configurations represent snapshots of the global state, while indistinguishability relations capture what each process can observe locally. The task specification defines which outputs are valid for given inputs. We introduce now the construction of a sheaf task, that connects those concepts and allows us to reason about task solvability.

\begin{definition}[Task Sheaf]\label{def:task-sheaf}
  Let \(\systemslice = (\gstates, \{\csim_{\proc}\}_{\proc \in \processes})\) be a system slice obtained of set of runs \(\runs\), and let \(\task = \langle \mathcal{I}, \mathcal{O}, \Delta \rangle\) be a task. The \demph{task sheaf} \(\F_{\systemslice,\task}\) is a cellular sheaf defined as follows.

    \begin{enumerate}
        \item (\demph{stalks of configurations}) For each configuration $\g \in \systemslice$, the stalk is $\F_{\systemslice,\task}(\g) = \Delta(\pi_I(\g))$, i.e., the set of possible valid output configurations given the input assignments in $\g$. \label{def:task-sheaf-vertex}

        \item (\demph{stalks of relations}) For each edge $(\g, \h)_{\proc} \in \csim_{\proc}$ between configurations $\g$ and $\h$ for process $\proc$, the stalk is $\F_{\systemslice,\task}((\g, \h)_{\proc}) = \{ \pi_{\proc}(x)\;|\;x \in \F_{\systemslice,\task}(\g) \cup \F_{\systemslice,\task}(\h) \}$, i.e., the set of possible values that process $\proc$ can choose in either of the adjacent configurations. \label{def:task-sheaf-edge}

        \item (\demph{restriction maps}) The restriction map from a configuration $\g$ to an edge $(\g, \h)_{\proc}$ is $\F_{\systemslice,\task, \g \trianglelefteq (\g,\h)_{\proc}} = \pi_{\proc}$, i.e., it projects an output configuration to its $\proc$'th entry. \label{def:task-sheaf-restriction}
    \end{enumerate}
\end{definition}

The vertices of $\F_{\systemslice,\task}$ are configurations, while the edges in $\csim_{\proc}$ connect configurations where $\proc$ should have the same decision value.

\begin{example}[Task sheaf after one iterations]\label{ex:sheaf-graph}
    In \Cref{fig:sheaf-graph} we depict a cellular category (note, that it is also the system frame of the lossy link synchronous message adversary) with objects representing the possible task solving outputs. 
    \begin{figure}[ht!]
        \makebox[\textwidth][c]{\scalebox{0.9}{\tikzfig{LL-example-2}}}
        \caption{Depiction of the system frame consisting of all configurations after one step of the lossy-link synchronous message adversary annotated with the binary consensus task stalks.
        For example when both inputs are $0$, then the stalk is a singleton containing just the decision vector ${0}\choose{0}$.}
        \label{fig:sheaf-graph}
    \end{figure}
    Indistinguishability edges are also assigned objects (in our case simply the union of both adjacent vertices' objects, i.e., all possible task outputs of the specific agent that cannot distinguish both configurations) and are omitted for space except at two edges. The restriction maps go from vertices to edges and map the global task output to the non-distinguishing agents output.
\end{example}

This explicit construction captures the constraints imposed by the task on the decisions of the processes. In order to establish its mathematical properties, we provide now its equivalent formulation as a colimit of task sheaves defines for each process.

\subsection{Categorical foundations} \label{sec:task-sheaf-categorical} 
The task sheaf has an equivalent process-wise formulation using the chromatic semi-simplicial sets (csets) as data modeling the distributed tasks. Csets were introduced by Goubault et al. \cite{GoubaultKniazevLedentRajsbaum23a}, and can be understood as sets of output values with enough structure to represent labeled configurations, a categorical generalization of the simplicial complexes used in \Cref{def:task}.

For each process \(\proc \in \processes\), a sheaf \(\F_{\systemslice,\task} : \Cat{Cell}(\systemslice) \To \Cat{CSet}_{\processes}\) is defined.
Whenever the system slice $\systemslice$ and the task $\task$ are clear from the context, we will write $\F$ and \(\F_\proc\). \(\F_\proc\) has the following structure:
\begin{itemize}
    \item \(\Cat{Cell}(\systemslice)\) is the cellular category obtained by localizing the system slice \(\systemslice\) to process \(\proc\), i.e., from the poset \(P_{\proc}\) induced by  \(\csim_{\proc}\) over \(\gstates\). Note that a pair of configurations \((\g, \h) \in \; \csim_{\proc}\) induces the relations \(\g \trianglelefteq (\g, \h)\) and \(\h \trianglelefteq (\g, \h)\).
    \item \(\Cat{CSet}_{\processes}\) is the category of chromatic semi-simplicial sets.
\end{itemize}

In this formalism, configurations \(\g\) become \(0\)-cells and equivalence causal consistency become \(1\)-cells. The sheaf \(\F_{\proc}\) maps a set of configurations to the set of \(n\)-simplices \(\Delta_{n}\) corresponding to its acceptable outputs, according to the task specification \(\task\), and maps an inclusion of configurations \(\g \xhookrightarrow{} (\g,\h)\) to face maps \(\F_{\g \trianglelefteq (\g,\h)} : \Delta_{n} \to \Delta_{1}\), where a set of output decisions is sent to decisions process-wise, i.e. its \(\proc\)th colored faces, such that the following diagram commutes.

\ctikzfig{sheaf-task}

The sheaves defined for all agents \(\{\F_{\proc}\}_{p_{i} \in \Pi}\) are a set of functors with common codomain.

\ctikzfig{sheaf-fibration}

These form a subcategory of the category of cellular sheaves, where each object is a cellular sheaf \(\F_\proc : \Cat{C}_\proc \to \Cat{D}\) and morphisms are commutative squares

\ctikzfig{sheaf-morphism}

where only all sheaves have a common codomain. The colimit of such cellular categories is well behaved and can be lifted for a colimit of cellular sheaves under those assumptions. The colimit of \(\{\F_{\proc}\}_{\proc \in \processes}\) then gives us a sheaf \(\F\) that captures all of the information on the individual sheaves, where a global section exists iff there is a global section on the individual ones. The sheaf \(\F\) is defined over each configuration \(\g\) (resp. indistinguishability edge \((\g,\h)\)) as the colimit of the individual sheaves localized at \(\g\) (resp. \((\g, \h)\)). The sheaf \(\F\) coincides with the explicit definition given in \Cref{def:task-sheaf}.

The colimit of \(\{\F_\proc\}_{\proc\in\processes}\) coincides with the task sheaf \(\F\) from \Cref{def:task-sheaf}. This categorical perspective ensures our construction is well-founded and enables the cohomology computations in \Cref{sec:cohomology}.

\subsection{Solvability as Sections} \label{sec:task-sheaf-solvability} 

We now define sections, which capture globally consistent assignments of data across the cellular complex. They are a fundamental concept in sheaf theory and will complete the language needed for our analysis of distributed task solvability.

\begin{definition}[Section] \label{def:sections}
  Let \(\F: \Cat{Cell}(P_X) \to \Cat{D}\) be a cellular sheaf over \((X, P_X)\). A \demph{section} of \(\F\) is a choice of elements \(s = \{s_\alpha \in \F(\alpha) \mid \alpha \in \Cat{Cell}(P_X)\}\) such that for every pair of cells with \(\alpha \trianglelefteq \beta\) the values coincide through restriction maps, i.e., \(\F_{\alpha \trianglelefteq \beta}(s_\beta) = s_\alpha\). The set of all global sections of \(\F\) is denoted by \(\Gamma(X, \F)\).
\end{definition}

A section is a choice of values, one per cell, such that the same value is obtained if we restrict to a cell from each of its incident neighbors. In the case of a task sheaf, it is a choice of data (i.e, output value) for each vertex (i.e, configuration) and each edge (i.e, causal consistency relation) such that they all agree under the restriction maps. Explicitly, it is an element of the direct sum of the stalks:
\[
  \Gamma(\F;\systemslice) =
  \bigl( \bigoplus_{\gstate \in \gstates} \F_{\systemslice, \task}(\gstate) \bigr) \times
  \bigl( \bigoplus_{(\gstate,\gstate') \in \{\csim\}_\{\proc\in\processes} \F_{\systemslice, \task}((\gstate,\gstate')) \bigr) .
\] 

We can now state our main characterization theorem, which establishes that task solvability is equivalent to the existence of sections in our task sheaf.

\begin{theorem}[Terminating Task Solvability]\label{thm:term_task_solvable}
    Let $T = \langle \mathcal{I}, \mathcal{O}, \Delta \rangle$ be a task and $\runs$ a system: there exists a \demph{terminating decision map} $\delta$ solving $T$ iff there exists a \demph{execution cut} $A$ such that its system slice, $\ccl(\slice(A))$, together with it causal consistency relation $\{ \simeq_{\proc} \}_{\proc\in\processes}$, has a section over the task sheaf $\F_{A, \simeq}$.
\end{theorem}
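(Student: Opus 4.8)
**The plan is to prove both directions of the equivalence by explicitly translating between decision functions and sections**, using the execution cut as the pivot. The forward direction takes a terminating decision map and manufactures an execution cut together with a section; the reverse direction takes a section and reads off a decision function. The key conceptual bridge is that the task sheaf's restriction maps (the projections $\pi_\proc$) encode precisely the constraint that a process must decide the same value on indistinguishable configurations, while the stalks $\F_{\systemslice,\task}(\g) = \Delta(\pi_I(\g))$ encode validity. So a globally consistent choice of outputs (a section) is essentially the same data as a correct decision function, once we have restricted attention to the right finite frontier.

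For the forward direction, suppose $\delta$ is a terminating decision map solving $\task$. First I would extract the execution cut: for each run $\run$, termination guarantees a step $t$ at which every process has decided, so I take $A$ to be the set of first fully-terminated configurations along each run. By \Cref{def:excut} this intersects every run, and by construction it is a \emph{terminal} execution cut. Next I would build the candidate section $s$. On each configuration $\g \in \ccl(\slice(A))$, the decided values of the processes (pulling decisions forward along $\simeq_\proc$ from the terminal configuration, using causal consistency to guarantee the value is well-defined and final) assemble into an output vector; validity of $\delta$ ensures this vector lies in $\Delta(\pi_I(\g))$, hence in the stalk $\F_{\systemslice,\task}(\g)$. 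On each edge $(\g,\h)_\proc \in \csim_\proc$, indistinguishability means $\pi_\proc(\g) = \pi_\proc(\h)$, so $\proc$ decides the \emph{same} value in $\g$ and $\h$; I set $s_{(\g,\h)_\proc}$ to be that common $\proc$-entry. The restriction-map condition $\F_{\g \trianglelefteq (\g,\h)_\proc}(s_{\g}) = s_{(\g,\h)_\proc}$ then holds by definition of the projection, which is exactly where the determinism of $\delta$ on indistinguishable states is consumed.

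For the reverse direction, suppose a terminal execution cut $A$ has a section $s \in \Gamma(\F_{A,\simeq})$. I would define $\delta_\proc$ on a local state $\ell$ by locating any configuration $\g \in \ccl(\slice(A))$ with $\pi_\proc(\g) = \ell$ and setting $\fdeci_\proc(\ell) := \pi_\proc(s_\g)$; for local states not appearing in the slice (i.e.\ before the frontier is reached) $\delta_\proc$ outputs $\bot$. Well-definedness is the crux: if two configurations $\g, \h$ share $\proc$'s local state, they are $\sim_\proc$-related, hence connected by an edge in $\csim_\proc$, and the section condition forces $\pi_\proc(s_\g) = s_{(\g,\h)_\proc} = \pi_\proc(s_\h)$. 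Termination follows because $A$ is an execution cut, so every run reaches the frontier and every process decides there; validity follows because $s_\g \in \F_{\systemslice,\task}(\g) = \Delta(\pi_I(\g))$ for each $\g \in A$, so the assembled decision vector is a valid output. Causal consistency guarantees that once a process decides it never revises the value, matching the "first non-$\bot$" semantics of \Cref{def:task-validity}.

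\textbf{The main obstacle} I anticipate is the bookkeeping around partially terminated configurations and the interplay of $\sim_\proc$ with $\leq$ in the causal closure. Specifically, I must verify that pulling a decided value forward from a process's \emph{earliest} terminal configuration to all later configurations in $\ccl(\slice(A))$ is consistent with the section's edge values along the causal-consistency relation $\simeq_\proc = (\leq \circ \sim_\proc)^{\mathrm{sym}}$ — that is, that the symmetric-transitive interaction of "same local state" and "is a causal successor" never forces two \emph{different} decided values onto the same process within one connected component of the slice. This is exactly the content the causal closure construction is designed to protect, but discharging it rigorously requires checking that along any $\simeq_\proc$-path the $\proc$-decision is invariant, which reduces to finiteness/reachability properties of the slice rather than to any further property of the task itself. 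The termination-to-cut extraction in the forward direction is routine given \Cref{def:task-validity}, so the whole weight of the argument rests on this well-definedness-of-decisions step.
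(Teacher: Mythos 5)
Your proposal is correct and follows essentially the same route as the paper's own proof: both directions extract the cut $A$ from the first fully-decided configurations, translate $\delta$ into a section via the determinism-on-indistinguishable-states argument, and conversely read $\delta$ off the section with well-definedness supplied by the restriction maps, termination by the cut property, and validity by the stalks containing only $\Delta$-valid vectors. Your explicit treatment of the well-definedness of decisions along $\simeq_\proc$-paths in the causal closure is in fact more careful than the paper's rather terse handling of that same point.
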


\begin{proof}
    Let us assume first that $\delta$, solves $T$. Therefore, for any run $\run$, there exists a configuration $\g_\run$ that is the earliest configuration of $\run$ where each process has decided $\delta(\g) \neq \bot$. Note that $A = \{ \g_\run \: \vert \: \run \in \Sigma \}$ is a system cut.

    Consider the system slice $\ccl(\slice(A))$. By assumption $\delta_p$ assigns a $T$ solving, non-$\bot$ value to every configuration $\g \in \ccl(\slice(A))$, and satisfies \cref{def:task-sheaf-vertex}, \cref{def:task-sheaf-restriction} and \cref{def:task-sheaf-edge} making $\F_{A,\simeq}$ a sheaf. The section condition in \cref{def:sections} is satisfied as decisions are final, i.e., any successor configuration of a terminated configuration has the same decision, and $\delta_p$ is a function on local states, i.e., configurations $p$ cannot distinguish are mapped identically by $\delta_p$.

    Now to prove the converse, assume that there exists an execution cut $A$ such that its system slice, $\ccl(\slice(A))$, has a section $S$. We construct a terminating decision map by setting
    $$ \delta_p(\pi_{p}(\g)) = \begin{cases}
            S(\phi(A,\h)) &\text{ for any $\h$ where } \pi_p(\h) = \pi_p(\g) \text{ and } \h \in A\\
            \bot  &\text{ else.}
        \end{cases} $$
    Clearly $\delta_p$ terminates because $A$ is an execution cut and satisfies \cref{def:task-validity}.
    $\delta_p$ solves $T$ because first, the only vectors in the stalks over the configurations are task solving vectors. And second, $\delta_p$ is well-behaved and maps identical inputs to identical outputs by \Cref{def:task-sheaf-restriction} and the section condition in \Cref{def:sections}, i.e, per-processes outputs need to agree over indistinguishable configurations.
\end{proof}

See the following \cref{ex:term-task-sheaf} for an application of \cref{thm:term_task_solvable}.

\begin{example}[Consensus on Task Sheaf]\label{ex:term-task-sheaf}
    In \cref{fig:sheaf_example} we depict a section over a task sheaf defined on a system slice. The vectors over the configurations in the system slice correspond to the valid choice of data that the sheaf took. Restriction maps and stalks are not depicted.

    \begin{figure}[ht!]
        \centering
        \makebox[\textwidth][c]{\scalebox{0.8}{\tikzfig{LL-tilted-consensus}}}
        \caption{We again consider the lossy link synchronous message adversary together with the epsilon agreement task. The configurations within the dashed polygon constitute a system slice solving the $\epsilon = 0.5$ agreement task in the case where the only allowed inputs are $(0,1)$ or $(1,2)$. The solid colored lines represent indistinguishabilities, the dashed lines represent the causal consistency relation. The system grows upwards.}
      \label{fig:sheaf_example}
    \end{figure}
\end{example}

Now that we have characterized solvability of a task $T$ under a protocol $\protocol$ in terms of its task sheaf, $\F_{A,\simeq}$, we will also establish a relation between local cohomology of the task sheaf and the task solvability.

\section{Computing Solutions} 
\label{sec:cohomology}

Given a communication adversary $K$ representing some adversarial entity together with a task $T$, we search for a protocol together with a decision map. As the full-information protocol provides the finest possible execution tree, its natural to start the search process there. Slices that are finite in size can be recursively enumerated and, although that's computationally inefficient, tested. If we find a finite slice $A$ that allows for a terminal decision map we can try to optimize the protocol inducing the execution tree. If we do not find a slice, but keep looking forever, then a finite slice does not exist and the task is not wait-free solvable, although it might be solvable given an \demph{infinite slice}.

In this section we focus on wait-free solvability, so assume we have found a finite terminating slice $\ccl(\slice(A))$, we computationally determine the space of all sections and therefore the decision map $\delta$. We do this via \emph{cohomology}, i.e., we turn our structure into an abelian group and extract from it all possible solutions.

Given a task sheaf \(\F\), the process of obtaining the \(n\)th sheaf cohomology can be understood as an iteration of the following steps.
\[
  H^{n}(\F) : \Cat{Cell}(X) \xrightarrow{\F} \Cat{CSet}_{\Pi} \xrightarrow{\Z} \Cat{sAb} \xrightarrow{\delta^{n}} \Cat{Ch}_{\Z} \xrightarrow{H^{n}} \Cat{Ab}
\]

Where \(\Cat{sAb}\) is the category of simplicial abelian groups, \(\Cat{Ch_{\Z}}\) is the category of (co)chain complexes with integer coefficients and \(\Cat{Ab}\) the category of Abelian groups, where our cohomology lives.

We provide a brief explanation, and illustrate it below in \Cref{ex:approx_agreement}. Given a cset \(\mathcal{O}\) containing the possible system output states, we can obtain a simplicial abelian group through the left adjoint of the forgetful functor that sends it to the underlying simplicial set\footnote{Here chromatic semi-simplicial sets are treated as simplicial sets for simplicity, as they only add the process labeling that would require extra bookkeeping.}. The third map gets us a cochain complex with differentials defined from the alternating sum of the (co)face maps. Finally, we obtain the \(n\)-th cohomology group, which corresponds to a space of sections of our sheaf. This construction is well known in the literature of algebraic topology \cite{hatcheralgtop} and we adapt it as a tool for understanding distributed tasks.

\begin{definition}[Space of Zero- and One- Cochains]
    Resembling \Cref{def:sections}, we define $$C^0(\eframe; \F) = \underset{v\in V(\eframe)}{\bigoplus} \F(v)$$ as the space of \demph{zero cochains} of the sheaf $\F$, i.e., the vector space of all possible assignments to vertices (configurations) in $\F$. Similarly $$C^1(\eframe; \F) = \underset{e_{\proc} \in E(\eframe)}{\bigoplus} \F(e_{\proc})$$ is the space one \demph{one cochains}, i.e., the space over possible output choices for processors.
\end{definition}

The two cochain groups are connected via a linear coboundary map $\delta$. This maps a specific choice of output vectors to the individual choices along the indistinguishability edges defined by the restriction maps. To define $\delta$ we chose an arbitrary direction on each indistinguishability edge $e = \g \rightarrow \h$ just to facilitate an algebraic representation.

\begin{definition}[Coboundary Map]
    We denote by $d: C^0 \mapsto C^1$ the coboundary map, defined per edge $e = (\g, \h)_{\proc}$ in the sheaf as:
    $$ d(x)_{e_{\proc}=(\g,\h)_{\proc}} = \pi_{\proc}(x_{\h}) - \pi_{\proc}(x_{\g}) , $$
    where we assume that the chosen direction goes from $\g \rightarrow \h$.
\end{definition}

We can represent the coboundary map $d$ as a coboundary matrix $D$ where rows are indexed by edges and columns are indexed by configurations:
$$D_{\g,e=(\h,\h')_{p_j}} = \begin{cases}
    d(e) & \text{ if } \h = \g \text{ or } \h' = \g \\
    0    & \text{ otherwise. }
\end{cases}$$
One can think of $D$ simply as computing the difference between two indistinguishable configurations. A section on our sheaf is a $0$-cochain that is mapped to $0$ by $d$, so any assignment to configurations such that any process that cannot distinguish two configurations, decides the same thing. The set of all sections is then the kernel $ker(D)$.

\begin{definition}[Zeroth Cohomology]
    The zero-th cohomology is $ker(D)$, i. e., the kernel of the coboundary map.
\end{definition}

\begin{example}\label{ex:approx_agreement}
    Let us consider the approximate agreement problem for 2 processes in the lossy link synchronous message adversary setting, illustrated in \Cref{fig:approx-LL-example}. We are interested in whether given a full information protocol, the induced system execution graph allows for a terminal execution cut, such that by \Cref{thm:term_task_solvable} we can find a section that gives us a decision map.

    We set the possible input vectors to $I = \{ {{0}\choose{0}},{{1}\choose{0}},{{0}\choose{1}},{{1}\choose{1}}\}$, the possible output vectors to $O = \{ {{x}\choose{y}}\;|\; x,y \in \{0,0.25,0.5,0.75,1\}\}$ and define the validity map as 
    $$\Delta({{x}\choose{y}}) = \begin{cases} {{0}\choose{0}} & \text{ when } x = y = 0 \\ {{1}\choose{1}} & \text{ when } x = y = 1 \\ O & \text{ otherwise.} \end{cases}$$
    Intuitively, in the initial configuration we cannot find a section, since both configurations, $(0,0)$ and $(1,1)$, force the respective connecting configurations $(0,1)$ and $(1,0)$ to choose an output vector that projects to $1$ and $0$, which does not exist in $O$. We can formalize this impossibility starting with the co-boundary matrix $D^0$, where we  number the configurations by the initial values interpreted in binary, and edges are just tuples of configurations, with the direction following the written order. Note that this is the co-boundary matrix of the system slice consisting of all \demph{initial configurations}, i.e., after $0$ steps. The co-boundary matrix for all configurations after one step has the unreadable dimensions of $12x12$.

    $$ D^0 = \begin{pmatrix}
           & 0 & 1 & 2 & 3 \\
        01 & -\pi_a(.) & \pi_a(.) & & \\
        13 & & -\pi_b(.) & & \pi_b(.) \\
        32 & & & \pi_a(.) & -\pi_a(.) \\
        20 & \pi_b(.) & & \pi_b(.) & \\
    \end{pmatrix}$$
    In order to find the kernel of $D^0$, we can assume some arbitrary assignment vector to the configurations $x = ({{0}\choose{0}},x_1,x_2,{{1}\choose{1}})^T$ (as the configurations $1$ and $3$ have only one possible choice by validity) and solve $D^0x=0$:
    $$ D^0x = \begin{pmatrix}
        -0^a + x^a_1 \\
        -x^b_1 + 1^b \\
        x^a_2 - 1^a \\
        0^b - x^b_2 \\
    \end{pmatrix} \implies \begin{matrix}
        x^a_1 = 0^a, \\
        x^b_1 = 1^b, \\
        x^a_2 = 1^a, \\
        x^b_2 = 0^b \\
    \end{matrix} \implies x_1 = \begin{pmatrix}
        0 \\
        1
    \end{pmatrix} \text{ and } x_2 = \begin{pmatrix}
        1 \\
        0
    \end{pmatrix}.$$
    This proves that epsilon agreement is impossible in $0$ steps as the required solutions $x_1$ and $x_2$ are not possible solutions, the kernel is trivial. The impossibility itself does not come as a surprise. The novelty lies in the fact that that every step we took was purely deterministic and computable, meaning that such operations could have been done by a program.

    We depict the system slice consisting of all configurations after \demph{one step} in \cref{fig:approx-LL-example}. $(1,1\rightarrow)$ forces $(1,0\rightarrow)$ to choose an output that matches $a$'s decision, i.e., ${{1}\choose{.75}}$. This forces $b$'s decision in $(1,0\leftrightarrow)$ to be $.75$ a valid output could be ${{.75}\choose{.5}}$. Again, this forces $a$'s hand in $(1,0\leftarrow)$ to $.5$, we could choose ${{.5}\choose{.25}}$ here. But now we run into trouble, in configuration $(0,0\leftarrow)$ we cannot find an output vector that maps $b$'s value to $.25$ as the only possible choice here is ${0}\choose{0}$! Therefore this assignment is not a section!

    \begin{figure}[ht!]
        \makebox[\textwidth][c]{\tikzfig{LL-approx-example-2}}
        \caption{An example of the approximate agreement task (with $\epsilon = 0.25$) in the lossy-link synchronous communication adversary, as in \cref{ex:system-frame}. The system slice consists again of all configurations after \demph{one round}, we depict a possible assignment of task solving vectors to the configurations. Starting from $(1,1\rightarrow)$ we then draw all the restriction maps and the following derived vectors until we arrive at an assignment where the restriction maps do not agree. Essentially this just shows that, this assignment is not a section.}
        \label{fig:approx-LL-example}
    \end{figure}

    Note that this example is not a proof that one cannot solve approximate agreement after one step, and is only meant to illustrate the role of cohomology in determining task solvability, as the co-boundary matrix after one step is already huge. But, as already illustrated, any step here is deterministic and computable, therefore we can find a section after two steps, implying the existence of a protocol solving epsilon agreement.
\end{example}

We formalize this intuition in \Cref{thm:computable_delta}.

\begin{theorem}[Computable Decision Maps]\label{thm:computable_delta}
    Let $T$ be a task that can be solved in a \emph{finite} system slice (i. e. finitely many terminal configurations) in a given execution graph \exec, then its decision map is computable in finite time.
\end{theorem}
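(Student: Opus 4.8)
The plan is to reduce the semantic claim to a finite linear-algebra problem via the characterization already established, and then argue that every step of the resulting procedure is effective. First I would apply \Cref{thm:term_task_solvable} to replace ``there is a decision map $\delta$ solving $T$'' by the equivalent combinatorial statement ``the task sheaf $\F_{A,\simeq}$ admits a section'' for some terminal execution cut $A$. By hypothesis such an $A$ exists and has only finitely many terminal configurations, so its system slice $\ccl(\slice(A))$ has finitely many $0$-cells (configurations) and finitely many $1$-cells (causal-consistency edges). Provided the input and output complexes $\mathcal{I}$ and $\mathcal{O}$ are finite --- which holds for the discretized tasks of interest, such as the approximate agreement of \Cref{ex:approx_agreement} --- each stalk $\F(\g) = \Delta(\pi_I(\g))$ is a finite set, so the whole sheaf is a finite combinatorial object.

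Next I would make the search for $A$ explicit. Starting from the full-information protocol, whose execution graph is the finest one (as noted at the opening of \Cref{sec:cohomology}), the finite terminal execution cuts can be recursively enumerated by exploring configurations in order of increasing depth, producing each candidate $A$ together with its causal closure and causal-consistency relation. For each candidate I would assemble the coboundary matrix $D$ whose rows are indexed by edges and columns by configurations, as in the coboundary-map construction of \Cref{sec:cohomology}. Since the slice is finite and the stalks are finite, $D$ is a finite integer matrix and its kernel --- which, by the zeroth-cohomology definition, is exactly the space of sections --- is obtained by standard finite linear algebra (Gaussian elimination, or Smith normal form over $\Z$). Because a solving finite slice exists by assumption, this enumeration is guaranteed to halt; once a section is found, the decision map $\delta$ is read off from it precisely as in the converse direction of the proof of \Cref{thm:term_task_solvable}.

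The hard part will be bridging the gap between a \emph{formal} section, i.e.\ an arbitrary element of $\ker(D)$ in the free abelian group generated by the stalks, and a \emph{genuine} task solution, which must assign to each configuration exactly one valid output vector from its stalk $\Delta(\pi_I(\g))$. The kernel condition enforces agreement across indistinguishability edges but not realizability of the per-vertex choice. I would settle this by noting that realizability is itself a finite constraint: there are at most $\prod_{\g} |\Delta(\pi_I(\g))|$ candidate assignments, so one may intersect $\ker(D)$ with the finite set of realizable $0$-cochains, or simply test the agreement condition on each candidate directly; either route keeps the procedure finite. A secondary, more foundational point I would flag is that the argument presupposes the execution graph of the full-information protocol to be recursively presented --- its configurations and the relations $\sim_\proc$ must be enumerable --- a standing computability assumption on the model that is met by the synchronous message-passing systems used throughout the paper.
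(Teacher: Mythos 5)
Your proposal is correct and takes essentially the same route as the paper's own proof: enumerate the execution graph layer by layer, test each candidate slice by computing the kernel of the coboundary map (the zeroth cohomology), and read off the decision map via \Cref{thm:term_task_solvable}, with termination guaranteed by the assumed existence of a finite solving slice. Your additional observations --- that stalks must be finite, that elements of $\ker(D)$ are a priori formal linear combinations whose realizability as genuine per-vertex output choices must be checked separately, and that the execution graph needs a recursive presentation --- are careful refinements of points the paper's terse proof leaves implicit, and they strengthen rather than diverge from its argument.
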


The idea is simple: compute the execution graph \(\exec\) layer by layer and check whether any system slice admits a non-trivial zeroth cohomology. 

\begin{proof}
    Assume the task $T$ can be solved in a finite system slice. We can iteratively build up the tree $\exec(k)$ \footnote{As any configuration has at most finitely directly causally dependent configurations, i. e., children in the execution graph, we can label any node by its depth. Then we iteratively build the tree up to depth $n$.}, up to some $k$. For each $k$, chose any possible  $A$ and corresponding system slice $\ccl(\slice(A)) \subseteq \ccl(\slice(A))$ and compute its zeroth cohomology. If it admits a non-trivial kernel on $\ccl(\slice(A))$, then derive a protocol as described in \cref{thm:term_task_solvable}.

    By assumption such a $\slice(A)$ exists. By building the tree iteratively, eventually this $A$ will be found and the iterations terminates.
\end{proof}

\section{Conclusions} \label{sec:conc} 
Our results, and in particular, our task sheaf construction, constitutes to the best of our knowledge, the first sheaf-theoretic characterization of general distributed computing tasks. Moreover, the generality of our model allows us to describe a wide range of systems that only need to satisfy minimal assumptions, namely, that the set of processes is finite, and that the communication is produced via messages.

By expressing tasks as a sheaf, we are able to incorporate cohomology theory as a powerful tool for distributed systems. For instance, the cohomology of a task sheaf is a group that represents the ``obstructions'' or ``limitations'' in the distributed system that prevent a specific task to be solved. Moreover, we show an impossibility result by simply looking at the cohomology group of its task sheaf. However, the cohomology of a task sheaf is not only restricted to determining impossibilities, but it may also be used for explicitly finding a protocol. Thus, sheaf-cohomology is shown to be a powerful, and promising tool for obtaining novel results and insights in distributed computing.

Finally, the rigorous categorical foundation of our approach provides a solid starting point for further research, such as incorporating failure models or exploring the complexity of protocol synthesis for different task definitions.

\bibliography{refs}

\begin{thebibliography}{10}

\bibitem{AbramskyBrandenburger11}
Samson Abramsky and Adam Brandenburger.
\newblock The sheaf-theoretic structure of non-locality and contextuality.
\newblock {\em New Journal of Physics}, 13(11):113036, 2011.
\newblock \href {https://doi.org/10.1088/1367-2630/13/11/113036}
  {\path{doi:10.1088/1367-2630/13/11/113036}}.

\bibitem{AEH75}
E.~A. Akkoyunlu, K.~Ekanadham, and R.~V. Huber.
\newblock Some constraints and tradeoffs in the design of network
  communications.
\newblock In {\em SOSP '75: Proceedings of the fifth ACM symposium on Operating
  systems principles}, pages 67--74, New York, NY, USA, 1975. ACM.
\newblock \href {https://doi.org/http://doi.acm.org/10.1145/800213.806523}
  {\path{doi:http://doi.acm.org/10.1145/800213.806523}}.

\bibitem{ACFR19}
Manuel Alc\'{a}ntara, Armando Casta\~{n}eda, David Flores-Pe\~{n}aloza, and
  Sergio Rajsbaum.
\newblock The topology of look-compute-move robot wait-free algorithms with
  hard termination.
\newblock {\em Distrib. Comput.}, 32(3):235–255, June 2019.
\newblock \href {https://doi.org/10.1007/s00446-018-0345-3}
  {\path{doi:10.1007/s00446-018-0345-3}}.

\bibitem{AttiyaCastanedaNowak23}
Hagit Attiya, Armando Castañeda, and Thomas Nowak.
\newblock Topological {Characterization} of {Task} {Solvability} in {General}
  {Models} of {Computation}.
\newblock In Rotem Oshman, editor, {\em 37th {International} {Symposium} on
  {Distributed} {Computing} ({DISC} 2023)}, volume 281 of {\em Leibniz
  {International} {Proceedings} in {Informatics} ({LIPIcs})}, pages 5:1--5:21,
  Dagstuhl, Germany, 2023. Schloss Dagstuhl – Leibniz-Zentrum für
  Informatik.
\newblock ISSN: 1868-8969.
\newblock URL:
  \url{https://drops.dagstuhl.de/entities/document/10.4230/LIPIcs.DISC.2023.5},
  \href {https://doi.org/10.4230/LIPIcs.DISC.2023.5}
  {\path{doi:10.4230/LIPIcs.DISC.2023.5}}.

\bibitem{casteigts_time-varying_2012}
Arnaud Casteigts, Paola Flocchini, Walter Quattrociocchi, and Nicola Santoro.
\newblock Time-varying graphs and dynamic networks.
\newblock {\em International Journal of Parallel, Emergent and Distributed
  Systems}, 27(5):387--408, 2012.
\newblock \href
  {https://arxiv.org/abs/https://doi.org/10.1080/17445760.2012.668546}
  {\path{arXiv:https://doi.org/10.1080/17445760.2012.668546}}, \href
  {https://doi.org/10.1080/17445760.2012.668546}
  {\path{doi:10.1080/17445760.2012.668546}}.

\bibitem{charron-bost09}
Bernadette Charron-Bost and André Schiper.
\newblock The {Heard}-{Of} model: computing in distributed systems with benign
  faults.
\newblock {\em Distributed Computing}, 22(1):49--71, April 2009.
\newblock \href {https://doi.org/10.1007/s00446-009-0084-6}
  {\path{doi:10.1007/s00446-009-0084-6}}.

\bibitem{Curry14}
Justin~M Curry.
\newblock Sheaves, cosheaves and applications, 2014.

\bibitem{bookof4}
Ronald Fagin, Joseph~Y. Halpern, Yoram Moses, and Moshe~Y. Vardi.
\newblock {\em Reasoning About Knowledge}.
\newblock MIT Press, 1995.
\newblock \href {https://doi.org/10.7551/mitpress/5803.001.0001}
  {\path{doi:10.7551/mitpress/5803.001.0001}}.

\bibitem{felberhummesfloresrincongaleana25}
Stephan Felber, Bernardo {Hummes Flores}, and Hugo~Rincon Galeana.
\newblock Brief announcement: {A} sheaf-theoretic characterization of tasks in
  distributed systems.
\newblock In Ulrich Schmid and Roman Kuznets, editors, {\em Structural
  Information and Communication Complexity - 32nd International Colloquium,
  {SIROCCO} 2025, Delphi, Greece, June 2-4, 2025, Proceedings}, volume 15671 of
  {\em Lecture Notes in Computer Science}, pages 425--430. Springer, 2025.
\newblock \href {https://doi.org/10.1007/978-3-031-91736-3\_26}
  {\path{doi:10.1007/978-3-031-91736-3\_26}}.

\bibitem{FichRuppert03}
Faith~E. Fich and Eric Ruppert.
\newblock Hundreds of impossibility results for distributed computing.
\newblock {\em Distributed Comput.}, 16(2-3):121--163, 2003.
\newblock URL: \url{https://doi.org/10.1007/s00446-003-0091-y}, \href
  {https://doi.org/10.1007/S00446-003-0091-Y}
  {\path{doi:10.1007/S00446-003-0091-Y}}.

\bibitem{FLP85}
Michael~J. Fischer, Nancy~A. Lynch, and Michael~S. Paterson.
\newblock Impossibility of distributed consensus with one faulty process.
\newblock {\em J. ACM}, 32(2):374–382, April 1985.
\newblock \href {https://doi.org/10.1145/3149.214121}
  {\path{doi:10.1145/3149.214121}}.

\bibitem{GoubaultKniazevLedentRajsbaum23a}
\'Eric Goubault, Roman Kniazev, J\'er\'emy Ledent, and Sergio Rajsbaum.
\newblock Semi-{{Simplicial Set Models}} for {{Distributed Knowledge}}.
\newblock {\em 2023 38th Annual ACM/IEEE Symposium on Logic in Computer Science
  (LICS)}, pages 1--13, 2023.
\newblock \href {https://doi.org/10.1109/LICS56636.2023.10175737}
  {\path{doi:10.1109/LICS56636.2023.10175737}}.

\bibitem{HM90}
Joseph~Y. Halpern and Yoram Moses.
\newblock Knowledge and common knowledge in a distributed environment.
\newblock {\em J. ACM}, 37(3):549--587, 1990.
\newblock \href {https://doi.org/http://doi.acm.org/10.1145/79147.79161}
  {\path{doi:http://doi.acm.org/10.1145/79147.79161}}.

\bibitem{HansenGhrist21}
Jakob Hansen and Robert Ghrist.
\newblock Opinion {{Dynamics}} on {{Discourse Sheaves}}.
\newblock {\em SIAM Journal on Applied Mathematics}, 81(2):2033--2060, 2021.
\newblock \href {https://doi.org/10.1137/20M1341088}
  {\path{doi:10.1137/20M1341088}}.

\bibitem{hatcheralgtop}
Allen Hatcher.
\newblock {\em Algebraic topology}.
\newblock Cambridge University Press, Cambridge, 2002.

\bibitem{HerlihyKozlovRajsbaum14}
Maurice Herlihy, D.~N. Kozlov, and Sergio Rajsbaum.
\newblock {\em Distributed Computing through Combinatorial Topology}.
\newblock Morgan Kaufmann, is an imprint of Elsevier, 2014.

\bibitem{HS93}
Maurice Herlihy and Nir Shavit.
\newblock The asynchronous computability theorem for t-resilient tasks.
\newblock In {\em STOC '93: Proceedings of the twenty-fifth annual ACM
  symposium on Theory of computing}, pages 111--120, New York, NY, USA, 1993.
  ACM.
\newblock \href {https://doi.org/http://doi.acm.org/10.1145/167088.167125}
  {\path{doi:http://doi.acm.org/10.1145/167088.167125}}.

\bibitem{JoslynCharlesDePernoGouldNowakPraggastisPurvineRobinsonStrulesWhitney20}
Cliff~A. Joslyn, Lauren Charles, Chris DePerno, Nicholas Gould, Kathleen Nowak,
  Brenda Praggastis, Emilie Purvine, Michael Robinson, Jennifer Strules, and
  Paul Whitney.
\newblock A {{Sheaf Theoretical Approach}} to {{Uncertainty Quantification}} of
  {{Heterogeneous Geolocation Information}}.
\newblock {\em Sensors}, 20(12):3418, 2020.
\newblock \href {https://doi.org/10.3390/s20123418}
  {\path{doi:10.3390/s20123418}}.

\bibitem{lamport1983weak}
Leslie Lamport.
\newblock The weak byzantine generals problem.
\newblock {\em Journal of the ACM (JACM)}, 30(3):668--676, 1983.

\bibitem{MacLaneMoerdijk94}
Saunders Mac~Lane and Ieke Moerdijk.
\newblock {\em Sheaves in {{Geometry}} and {{Logic}}: {{A First Introduction}}
  to {{Topos Theory}}}.
\newblock Universitext. Springer New York, 1994.
\newblock \href {https://doi.org/10.1007/978-1-4612-0927-0}
  {\path{doi:10.1007/978-1-4612-0927-0}}.

\bibitem{MakkaiRosicky14}
M.~Makkai and J.~Rosick\'y.
\newblock Cellular categories.
\newblock {\em Journal of Pure and Applied Algebra}, 218(9):1652--1664, 2014.
\newblock \href {https://doi.org/10.1016/j.jpaa.2014.01.005}
  {\path{doi:10.1016/j.jpaa.2014.01.005}}.

\bibitem{moses_layered_2002}
Yoram Moses and Sergio Rajsbaum.
\newblock A {Layered} {Analysis} of {Consensus}.
\newblock {\em SIAM Journal on Computing}, 31(4):989--1021, 2002.
\newblock \_eprint: https://doi.org/10.1137/S0097539799364006.
\newblock \href {https://doi.org/10.1137/S0097539799364006}
  {\path{doi:10.1137/S0097539799364006}}.

\bibitem{NSW24}
Thomas Nowak, Ulrich Schmid, and Kyrill Winkler.
\newblock Topological characterization of consensus in distributed systems:
  Dedicated to the 2018 dijkstra prize winners bowen alpern and fred b.
  schneider.
\newblock {\em J. ACM}, August 2024.
\newblock Just Accepted.
\newblock \href {https://doi.org/10.1145/3687302} {\path{doi:10.1145/3687302}}.

\bibitem{bookraynal13}
Michel Raynal.
\newblock {\em Concurrent {Programming}: {Algorithms}, {Principles}, and
  {Foundations}: {Algorithms}, {Principles}, and {Foundations}}.
\newblock Springer, Berlin, Heidelberg, 2013.
\newblock URL: \url{https://link.springer.com/10.1007/978-3-642-32027-9}, \href
  {https://doi.org/10.1007/978-3-642-32027-9}
  {\path{doi:10.1007/978-3-642-32027-9}}.

\bibitem{Robinson17}
Michael Robinson.
\newblock Sheaves are the canonical data structure for sensor integration.
\newblock {\em Information Fusion}, 36:208--224, 2017.
\newblock \href {https://doi.org/10.1016/j.inffus.2016.12.002}
  {\path{doi:10.1016/j.inffus.2016.12.002}}.

\bibitem{SW89}
Nicola Santoro and Peter Widmayer.
\newblock Time is not a healer.
\newblock In {\em Proc. 6th Annual Symposium on Theor. Aspects of Computer
  Science (STACS'89)}, LNCS 349, pages 304--313, Paderborn, Germany, February
  1989. Springer-Verlag.

\bibitem{santoro07}
Nicola Santoro and Peter Widmayer.
\newblock Agreement in synchronous networks with ubiquitous faults.
\newblock {\em Theoretical Computer Science}, 384(2):232--249, 2007.
\newblock Structural Information and Communication Complexity (SIROCCO 2005).
\newblock URL:
  \url{https://www.sciencedirect.com/science/article/pii/S0304397507003350},
  \href {https://doi.org/10.1016/j.tcs.2007.04.036}
  {\path{doi:10.1016/j.tcs.2007.04.036}}.

\bibitem{schmid_strongest_2018}
Ulrich Schmid, Manfred Schwarz, and Kyrill Winkler.
\newblock On the {Strongest} {Message} {Adversary} for {Consensus} in
  {Directed} {Dynamic} {Networks}.
\newblock In {\em Structural {Information} and {Communication} {Complexity}},
  Cham, 2018. Springer International Publishing.
\newblock \href {https://doi.org/10.1007/978-3-030-01325-7_13}
  {\path{doi:10.1007/978-3-030-01325-7_13}}.

\end{thebibliography}

\end{document}